\newcommand{\T}{\mathbb{T}}
\newcommand{\D}{\mathbb{D}}
\newcommand{\I}{\mathbb{I}}
\newcommand{\B}{\mathbb{B}}
\newcommand{\E}{\mathbb{E}}
\newcommand{\X}{\mathbb{X}}
\newcommand{\Y}{\mathbb{Y}}
\newcommand{\N}{\mathbb{N}}
\newcommand{\Z}{\mathbb{Z}}
\newcommand{\Ss}{\mathbb{S}}
\newtheorem{theorem}{Theorem}
\newtheorem{assumption}{Assumption}
\newtheorem{lemma}[theorem]{Lemma}
\newtheorem{definition}{Definition}
\DeclareMathOperator*{\argmax}{arg\,max}
\newcommand{\PV}[1]{}
\def\dd{{\rm d}}
\theoremstyle{definition}
\theoremstyle{remark}
\numberwithin{equation}{section}
\begin{document}

\title{Some bidding games  converging  to their unique pure equilibrium}

\author{Benjamin Heymann}
\address{Criteo AI Lab, 32 rue Blanche 75009 Paris, FRANCE}
\curraddr{Criteo, 32 rue Blanche 75009 Paris, FRANCE}
\email{b.heymann@criteo.com}

\author{Alejandro Jofr\'e}
\address{Universidad de Chile, Beauchef 851, Edificio Norte – Piso 7
Santiago - CHILE}
\email{ajofre@dim.uchile.cl}

\subjclass[2000]{Primary 91B26}

\date{}

\keywords{Auctions, Bayesian games, Bertrand's price Competition, Best Reply
Dynamics,  Bidding games,
Electricity markets, Lattices, Monotone comparative statics,  Pure Nash equilibria.}

\begin{abstract}
We introduce a class of  Bayesian bidding games for which we prove that the set of
pure Nash equilibria is a (non-empty) sublattice and 
we give a sufficient condition for
uniqueness that is often verified in the context of markets with
inelastic demand.
We propose a dynamic that converges to the extrema of the equilibrium set and derive
a scheme to compute  the extreme Nash
equilibria. 
 
\end{abstract}

\maketitle

\section{Introduction}
\label{sec:introduction}
The interaction between firms (or individuals) competing on price to maximize their profits in
an imperfect information environment constitutes a Bayesian game.
Very often most of the private information concerns the production costs of
the firms.  
Such  situations occur for instance in procurement auctions, commodity markets and
oligopolies.
It is then natural to ask about the Nash equilibria of such kind of
games. 
Do we have a guaranty of  existence? 
Is the equilibrium unique? 
Do we have an algorithm to compute it? 
Those are generically hard questions when the classical results of game theory are not
applicable. 
In this work we identify a class of games for which some of those questions can be answered using  tools from lattice theory.

An essential observation is that  a  bid increase by one firm is very often an incitation for  all other firms to increase their bids.
The exploitation of such monotonic behaviors, summarized in the notion
of strategic complementarity, is central in the study of many pricing games (and more generally in Bayesian games) and explains the intensive
use of lattice theory in a literature that  has many interesting results of pure Nash equilibria
existence. 
Those existence results, depending on the fixed point theorem over
which they are constructed,  differ by the underlying assumptions and the
additional information they provide on the equilibria set.

What follows builds on a rich literature. 
We now briefly review some of its major achievements.

\textbf{Optimization on lattices.}
The development of this work was strongly inspired by a book from
Topkis \cite{topkis1998supermodularity} in which the  author surveys some fundamental results of monotone comparative statics, in particular \cite{topkis1978minimizing,milgrom1994comparing,shannon1995weak,edlin1998strict}.
Monotone comparative statics   concerns setting where a parametrized
collection of optimal decisions are monotone in the parameter.

\textbf{Bayesian games and equilibrium existence.}
The class we introduce belongs to the large class of Bayesian games,
and more specifically has some strategic complementarity properties.

In \cite{vives1990nash}
Vives proves the existence of a pure Nash equilibrium for Bayesian games with general
action and type spaces  with   a Tarsky  fixed-point
theorem on lattice.  More precisely, the equilibrium set is a
non-empty complete lattice.
Payoffs need to be supermodular.

In \cite{athey2001single}, Athey shows the existence of a monotone
pure strategy equilibrium for finite  Bayesian games satisfying a
single-crossing condition.
Actions and types should be one dimensional.
The demonstration relies on
Kakutani's fixed-point theorem.
Athey generalizes the result to   a continuum action set when the
payoffs are continuous.

In \cite{mcadams2003isotone},
McAdams extends Athey results to a
multidimensional setting. 
Like in \cite{athey2001single}  the extension to  continuum action sets is obtained by taking the limit of
finite approximations   equilibria. 
McAdams, like Athey, uses a  single-crossing condition, combined  
with  a quasisupermodularity condition, and then applies the Glicksberg's  fixed point theorem.

In \cite{reny2004existence}, Reny and Zamir show that first
price, one unit auctions   with affiliated types and interdependent values have a pure equilibrium.
They combine a result by \cite{milgrom1982theory} with Athey
\cite{athey2001single} approach of limit of finite action grids.

Vives surveys in  \cite{vives2005complementarities} the use of
monotone comparative static tools for games with complementarities.
He points out some interesting properties of games with strategic
complementaries that are still valid in our framework: general
strategic spaces, existence of pure Nash equilibria, specific structure
of the set of equilibria, existence of a Best reply dynamics algorithm
to compute the extremal equilibria.
In the seventh section of the article, he discusses some aspects
specific to Bayesian games.

In \cite{van2007monotone},
Van Zandt and Vives
give a constructive proof of the existence of a pure Nash equilibrium for Bayesian games
satisfying a strategic complementarity condition.
The action and type
sets can be infinite dimensional.
The payoffs need to be supermodular and satisfy
some increasing difference property.
In addition,  they show that one can compute such equilibria by
best-reply iteration.
We propose a different  approach that turns out to work on the toy example we provide even for instances that violate the increasing difference property assumption.

In \cite{reny2011existence} Philip J. Reny generalizes the results of
Athey \cite{athey2001single} and McAdams \cite{mcadams2003isotone} on
the existence of monotone pure strategy equilibria in generic Bayesian
games. While Athey and McAdams proofs rely on the  convexity
of the  best reply sets, 
Reny uses a fixed point theorem that relies on the notion of
contractibility.
He shows that the result applies when the payoff function is weakly
quasi supermodular and satisfies a weak single crossing condition, and
concludes that his result is strictly more general than
\cite{athey2001single} and \cite{mcadams2003isotone}.
In particular, the result can  be applied when 
 type and action sets are infinite dimensional.
 The payoff functions need to be continuous in the actions.

\textbf{Equilibrium computation, best reply dynamics and fictitious
play.}
Many  approaches to compute an equilibrium consist in  mimicking the
behaviors of the players when the game is
repeated sequentially. As time goes on, each player takes a decision
based on the previous iterations of the game.
Basically, an approach is characterized by 
the  \emph{memory} of the players (do they remember a joint probability of actions,
marginals\ldots) and the way they  \emph{choose} the next
action.
The two historical approaches are the Cournot's t\^atonnements and Brown's fictitious play \cite{brown1949some}. In the standard Cournot's
t\^atonnements (or Best Reply dynamics), actions
are taken as best replies against the last actions of the other
players. 
In fictitious play, actions are taken as best replies against
 the average of the  other players past
actions.
There are no general results of fictitious play~\cite{10.2307/1969530} convergence for games
with complementarities. Yet one may consult
\cite{berger2009convergence,berger2008learning,milgrom1990rationalizability,milgrom1991adaptive}.
Vives discusses in  \cite{vives2005complementarities} 
the t\^atonnements  in a context very close to ours.
The convergence is derived from some monotonicity properties but
as he  points out, convergence cannot be ensured for an arbitrary
starting point.
Observe that most of these results are related to matrix games. 
We propose an alternative approach to fictitious play and Best Reply
dynamics  for  a class of Bayesian games to which the electricity
market introduced in \cite{NicolasFigueroaAlejandroJofrBenjaminHeymann} belongs. \\

\textbf{Auctions, public good games and Cournot equilibrium}
 Existence, uniqueness and computation of equilibrium are very important in games that model economic situation such as Cournot competition~\cite{gaudet1991uniqueness}, public good games~\cite{bayer2021best} and auctions~\cite{marshall1994numerical,lebrun1999first,fibich2003asymmetric,maskin2003uniqueness,Gayle2008,Fibich2011}.
 Our setting is quite close to first price auctions (in fact, the running example provided thereafter becomes a first price auction when the parameter $r$ goes to zero). However, it is notable that most results on first price auctions exploit the fact the first order condition on the optimal bid takes the form of a differential equation, which is something we do not have in the present paper.  
\\

\textbf{Concave games}
 Following the seminal work by Rosen~\cite{rosen1965existence}, there is a very important stream of literature that relies on concavity arguments~\cite{arrow60,rosen1965existence,Goodman1965NoteOE,gabay1980uniqueness,chenault1986uniqueness}.
As illustrated by the nature of the assumptions we require, our results are in essence different from this stream of literature.
For instance, our existence result does not rely on any smoothness assumptions.
The uniqueness result is for continuum of types --- such result could not be recovered by simply  discretizing the type set  and then taking the limit---.
The  fact that  the  dynamic lives in a lattice allows us to work directly with  continuum of types. 
Also, the example provided in this article does not satisfy the coercivity assumption required in~\cite{gabay1980uniqueness}, nor does it satisfy the second "convexity" assumption of~\cite{Goodman1965NoteOE}.
 \\
 
\textbf{Contributions.}
We identify a natural class of games that model sellers competing on price to increase their market share and maximize their profit. The main primitive of such game is the demand function associated to each seller. Such demand function can be, for instance, the result of an optimization problem solved by some hypothetical buyers. 
We use the specific structure of the set of equilibria (nonempty complete lattice) to derive a uniqueness sufficient condition. 
The condition can be interpreted as an oligopolistic price competition with inelastic demand. 
We also derive  a numerical scheme to compute the extrema of the equilibrium set. 
This scheme proved more robust than best reply iterations on the example that motivated this study.
Our approach underlies 
the geometry of the equilibrium set for this kind of games,  and the specific shape of gradient flow dynamic (monotonic iterate),  which allowed us  to design a scheme that maintain the monotony of the best replies,  even when the hypothesises were not fully satisfied. 
By contrast to other authors who used  discrete approaches to extend their result to Bayesian games~\cite{CEPARANO2017154} our perspective allows us to work directly on  games with a continuum of types.
\\
In the next section, we introduce the game and present our main
results. 
We illustrate those results on an example in  \S\ref{sec:examples}.
\S\ref{sec:existence}, \S\ref{sec:uniqueness} and \S\ref{sec:scheme} are dedicated to the
proofs of the three main results: the existence of a Nash equilibrium,
the uniqueness of the equilibrium  and
a  convergence of a t\^atonnement dynamics to the equilibrium.

\section{Game presentation and main results }
\subsection{Definitions}

\begin{definition}[Least Upper Bound ($\vee$)
and Greatest Lower Bound ($\wedge$) , see \cite{topkis1998supermodularity}]
Let $\X$ be  a partially ordered set, $\X'\subseteq\X$. 
We say that $\bar{x}\in \X$ is the least upper bound of $\X'$ when 
$\forall x\in\X, \quad \bar{x} \leq x \iff  x'\leq x\  \forall x'\in\X' $.
We say that $\underline{x}\in \X$ is the greatest  lower bound of
$\X'$ when 
$\forall x\in\X \quad x \leq \underline{x} \iff  x\leq x',\ \forall x' \in\X' $.
For $(x,y)\in\X^2$, we   denote by  $x\wedge y $ and $x\vee y$ 
the greatest lower bound and the least upper bound of the pair $\{x,y\}$. 
\end{definition}
\begin{definition}[Lattice, Sublattice, Complete Lattice, see \cite{topkis1998supermodularity}]
A partially ordered set $\X$ is a lattice iff  it contains a least upper
bound and a greatest lower bound for each pair of its elements.
A subset $\X'\subseteq \X$  is a sublattice if  it contains a least upper
bound and a greatest lower bound for each pair of its elements.
A lattice  in which each nonempty subset has a greatest lower bound
and a least upper bound is complete.
\end{definition}
We will use the  notion of increasing function in lattice, which
differs from the usual definition. Observe that one may also encounter
the term isotone in the literature.

\begin{definition}[Increasing]
We say that a function $f$ from two ordered sets  is \emph{increasing} if for all $x\leq y$,
$f(x)\leq f(y)$.
\end{definition}

\subsection{Game Presentation}

\label{sec:problem-presentation}
\subsubsection{Notations}

The game $(\I,\T,\B,\Sigma,K,p)$ consists in  a set of \emph{players}
$\I=1\ldots n$, $n\in \N$. 
For each player, there is
 a set of \emph{types} $\T^i$ and a set of  \emph{bids} (or
 \emph{action}) $\B^i$. 
Types and bides   
are included in 
 a compact interval $[c_*,c^*]$ (where $c_* > 0$) such that
   $\T\subset \B$. 
The  \emph{strategies}   are  applications
 from $\T^i$ to $\B^i$.
For each player, we denote by $\Sigma^i$ his strategy set.
For any $i\in \I$, the \emph{demand response}   $K^i$ is a   function
from the bid  set $\B$ to $[0,K^+]$, where $K^+>0$.
We generically denote by  $\sigma^i$ the elements of the
strategy set $\Sigma^i$,  $c^i$ the elements of $\T^i$ (because it can be interpreted as a
production \emph{cost} and we want to avoid any confusion with the
time variable), and $b^i$ the  bids, elements of $\B^i$. 
We use the standard notation of game theory  $-i$ to refer to all but player
$i$. 
Last but not least, $p^{-i}$ is a probability  density of support $\T^{-i}$.

For a strategy profile $\sigma^{-i}$, the
expected ex-ante payoff  $\Pi^i$  of
player $i$ of type $c^i$ bidding $b^i=\sigma^i[c^i]$, is 
\begin{equation*}
  \Pi_{\sigma^{-i}}^i(b^i,c^i) = \int_{c^{-i}\in T^{-i}} \pi^i(b^i,c^i,\sigma^{-i}[c^{-i}]) p^{-i}(c^{-i}) \dd c^{-i},
\end{equation*}
with the \emph{payoff} $\pi^i$ defined for
$(b^i,c^i,\sigma^{-i},c^{-i})\in \B^i\times \T^i\times \Sigma^{-i}
\times \T^{-i}$ by
\begin{equation*}
\pi^i(b^i,c^i,\sigma^{-i}[c^{-i}])=(b^i  - c^i ) K^i(b^i,\sigma^{-i}(c^{-i})).
\end{equation*}
In this expression, $K^i$ can be interpreted as the quantity $i$ is asked
to provide for a \emph{marginal price profile} $b^i$ when the other
players bid the marginal prices
$\sigma^{-i}[c^{-i}]$. So  the integrand is the profit of this player
if he has a marginal production cost $c^i$. 
The kernel $K^i$ corresponds to the market (or auctioneer) response to
the bids.  We assume the Kernel to be continuous.
In what follows we assume $\pi^i(b^i,c^i,\sigma^{-i}[c^{-i}])$ Lebesgue measurable with respect
to $c^{-i}$ for all $(b^i, c^i,\sigma^{-i})\in \B^i\times \T^i \times\Sigma^{-i}$.

\begin{definition}[Best Reply ]
 We denote by $BR^i$ the Best Reply set-valued mapping from
 $\Sigma^{-i}$ to the subsets of $\Sigma^i$ such that 
for any  $\sigma^{-i} \in \Sigma^{-i}$,
\begin{equation*}
BR^{i} (\sigma^{-i})=  \{\beta \in \Sigma^i: \quad \forall
(c,\sigma)\in\T^i \times\Sigma^i, \quad\Pi_{\sigma^{-i}}^i(\beta[c^i],c^i) \geq \Pi_{\sigma^{-i}}^i(\sigma^i[c^i],c^i) \}
  \end{equation*}
\end{definition}

\begin{definition}[Pure  Nash Equilibrium]
A strategy profile $\sigma\in \Sigma$ is a Pure Nash Equilibrium if for any $i\in\I$, 
 $\hat{\sigma}^i \in \Sigma^i $ and $c^i\in\T^i$
\begin{equation*}
  \Pi_{\sigma^{-i}}^i(\sigma^i[c^i],c^i) \geq \Pi_{\sigma^{-i}}^i(\hat{\sigma}^i[c^i],c^i) 
  \end{equation*}
\end{definition}

We   use the partial order $\leq_{\Sigma^i}$ on $\Sigma^i$ defined by
\begin{equation*}
\forall (\sigma_1,\sigma_2) \in \Sigma^i, \quad (\sigma_1\leq_{\Sigma^i}\sigma_2)\quad \mbox{iff} \quad
(\forall c\in\T^i ,\quad \sigma_1(c)\leq \sigma_2(c)).
\end{equation*}
We denote by $\leq_\Sigma$ the induced product order on $\Sigma$.

\begin{assumption}[Kernel Monotonicity]
\label{assumption:kernelmonotonicity}
For any $i\in\I$  $K^i(b^i,b^{-i})$, is increasing in $b^{-i}$ and
decreasing in $b^i$. 
\end{assumption}
The \emph{Kernel Monotonicity} assumption corresponds to the fact that the bidding occurs in a
competitive setting, and the demand tends to go to the cheapest
bidder.

\begin{assumption}[Strict Increasing Differences]
 \label{assumption:IDP}
For any $i\in\I$, $c\in\T^i$,  set $\pi^i_c= \pi^i(.,c,.)$. Then
$\pi^i_c$ satisfies the Strict Increasing Differences Property:
\begin{eqnarray*}
\forall (b_1,b_2,b^{-i}_1,b^{-i}_2)\in \B^{i}\times\B^i\times
\B^{-i}\times \B^{-i} \mbox{ such that }
b_1\leq b_2\mbox{ and } b_1^{-i} < b_2^{-i}:\\
\pi^i_c(b_2,b_1^{-i})-\pi^i_c(b_1,b_1^{-i})< \pi^i_c(b_2,b_2^{-i})-\pi^i_c(b_1,b_2^{-i}).
\end{eqnarray*}

 \end{assumption}

\subsection{Main results} 
\begin{theorem}[Existence of a Pure Nash Equilibrium]
\label{th:existence}
The set of
pure Nash equilibra is a nonempty complete lattice.
\end{theorem}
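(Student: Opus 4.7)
The plan is to realize the equilibrium set as the fixed-point set of the joint best-reply correspondence $BR = (BR^1,\dots,BR^n)$ on $\Sigma = \prod_i \Sigma^i$ (ordered by $\leq_\Sigma$), and then apply Zhou's lattice-valued extension of Tarski's fixed-point theorem. Concretely, I would check in order: (i) $(\Sigma,\leq_\Sigma)$ is a complete lattice; (ii) for every $\sigma^{-i}\in\Sigma^{-i}$, the set $BR^i(\sigma^{-i})$ is a nonempty complete sublattice of $\Sigma^i$; (iii) $BR^i$ is increasing in the strong set order, i.e.\ whenever $\sigma^{-i}_1\leq_\Sigma \sigma^{-i}_2$ and $\beta_k\in BR^i(\sigma^{-i}_k)$, then $\beta_1\wedge\beta_2\in BR^i(\sigma^{-i}_1)$ and $\beta_1\vee\beta_2\in BR^i(\sigma^{-i}_2)$.

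For (i), since strategies are functions from $\T^i$ to the compact interval $[c_*,c^*]$, pointwise suprema and infima of any collection of strategies lie again in $\Sigma^i$ (a small measurability argument via monotone enveloping handles any regularity requirement on $\Sigma^i$), so $\Sigma^i$ is complete and $\Sigma$ inherits completeness as a product. For (ii), the crucial structural feature is that $\Pi^i_{\sigma^{-i}}(b^i,c^i)$ depends on the type $c^i$ only through its own value of $b^i$, so the best reply can be assembled pointwise: for each $c^i$, pick $b^i(c^i)\in\argmax_{b\in\B^i}\Pi^i_{\sigma^{-i}}(b,c^i)$. Compactness of $\B^i$ and continuity of $K^i$ (via dominated convergence under the integral) give nonemptiness of each pointwise argmax, and the pointwise envelope selections (largest and smallest maximizers) furnish measurable elements of $BR^i(\sigma^{-i})$; the sublattice structure of $BR^i(\sigma^{-i})$ likewise reduces to the corresponding sublattice property of the pointwise argmax, which follows from Topkis' theorem applied with Assumption~\ref{assumption:IDP}.

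For (iii), I would invoke the Milgrom--Shannon / Topkis monotone comparative statics: Assumption~\ref{assumption:IDP} asserts strict increasing differences of $\pi^i_{c}$ in $(b^i,b^{-i})$, which, after integration against $p^{-i}(c^{-i})\,\dd c^{-i}$ and the substitution $b^{-i}=\sigma^{-i}[c^{-i}]$, transfers to increasing differences of $\Pi^i_{\sigma^{-i}}(b^i,c^i)$ in $(b^i,\sigma^{-i})$ on $(\B^i,\Sigma^{-i})$. Therefore the pointwise argmax is monotone in the strong set order as a function of $\sigma^{-i}$, and this monotonicity lifts immediately to $BR^i$ because the lattice operations on $\Sigma^i$ are themselves pointwise. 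Combining (i)--(iii), Zhou's theorem yields that the set of fixed points of $BR$ --- which by definition coincides with the set of pure Nash equilibria --- is a nonempty complete lattice.

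The main obstacle is step (ii): ensuring, in an infinite-dimensional type setting, that the pointwise construction of best replies actually produces an element of $\Sigma^i$ and that joins and meets of best replies remain best replies. The separability of $\pi^i$ in the type variable (no cross-type interaction within a player's own payoff) is what makes this possible; absent this structure, completeness of $BR^i(\sigma^{-i})$ would be the sticking point and a weaker fixed-point theorem would be needed.
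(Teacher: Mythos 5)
Your proposal is correct and follows essentially the same route as the paper: establish that the joint best-reply correspondence maps the complete lattice $\Sigma$ into nonempty complete sublattices, show it is increasing in the induced (strong) set order via Topkis/Milgrom--Shannon monotone comparative statics applied pointwise in the type, and conclude with the Tarski--Zhou fixed-point theorem for correspondences (the paper's Theorem~\ref{theorem:fixedpoint}). The only cosmetic difference is that the paper passes through strict single crossing and quasisupermodularity (Lemmas~\ref{lemma:SSCP} and~\ref{lemma:QSM}) rather than invoking increasing differences of $\Pi$ directly, but in this one-dimensional action setting the two are interchangeable.
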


\begin{theorem}[Uniqueness Sufficient Condition]
\label{th:uniqueness}
If
\begin{itemize}
\item for any $\alpha>0$, $(x,y)\in\B$, 
\begin{equation}
\label{as:scaling}
K(\alpha x,\alpha y) = K(x,y),
\end{equation}
\item for any $\sigma\in \Sigma$  equilibrium
strategy profile, $i\in \I$, and  $(c_1,c_2)\in [c_*,c^*]\times ]0,c^*]$ such that
$c_1>c_2$  
\begin{equation}
\label{eq:strictincrease}
\beta_1 >\beta_2, \forall
(\beta_1\beta_2)\in \argmax_{b\in\B^i} \Pi_{c_1}^i(b,\sigma^{-i})\times \argmax_{b\in\B^i} \Pi_{c_2}^i(b,\sigma^{-i})
\end{equation}
\item for any $\sigma\in \Sigma$  equilibrium
strategy profile, $i\in \I$, $\inf BR^i(\sigma^{-i})(c)$
and $\sup BR^i(\sigma^{-i})(c)$ are continuous
\item  for any $(\sigma_1,\sigma_2)\in \Sigma$  equilibrium
strategy profile
for all $(i,c)\in \I\times\T$,
\begin{equation*}
\label{eq:alpha}
\sigma_2^i(c^*)\frac {\sigma_1^i(c)} {\sigma_2^i(c)}\leq b^*
\end{equation*}

\end{itemize}
then the set of pure Nash equilibria is a singleton.
\end{theorem}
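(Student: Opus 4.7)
By Theorem~\ref{th:existence}, the equilibrium set is a nonempty complete lattice; call its extremal elements $\underline\sigma$ (smallest) and $\bar\sigma$ (largest). The plan is to prove $\underline\sigma=\bar\sigma$ by contradiction. Assume otherwise, so $\bar\sigma\geq_\Sigma\underline\sigma$ with strict inequality somewhere. Since every $\underline\sigma^i(c)\geq c_*>0$, the pointwise ratio $\alpha^i(c)=\bar\sigma^i(c)/\underline\sigma^i(c)$ is well defined and $\bar\alpha:=\sup_{i,c}\alpha^i(c)>1$. The third bullet makes $\underline\sigma^i$ and $\bar\sigma^i$ continuous (they coincide with extremal best-reply selections), so by compactness of $\I\times[c_*,c^*]$ the supremum is attained at some $(i^\dagger,c^\dagger)$.

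\textbf{Scaling identity.} The first step is to exploit the homogeneity (\ref{as:scaling}) of $K$. Substituting $b=\bar\alpha b'$ in $\int (b-c)K^i(b,\tau^{-i}(c^{-i}))p^{-i}(c^{-i})\dd c^{-i}$ and using $K(\bar\alpha x,\bar\alpha y)=K(x,y)$ yields, for any opponent strategy $\tau^{-i}$ and any type $c$ such that $c/\bar\alpha\in\T^i$,
\begin{equation*}
BR^i(\bar\alpha\tau^{-i})(c)=\bar\alpha\,BR^i(\tau^{-i})(c/\bar\alpha).
\end{equation*}
The fourth bullet guarantees that $\bar\alpha\underline\sigma$ stays inside the bid space $\B$, so this identity may legitimately be applied with $\tau^{-i}=\underline\sigma^{-i}$ at $(i^\dagger,c^\dagger)$.

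\textbf{Contradiction via complementarity.} The strict increasing differences property of Assumption~\ref{assumption:IDP} is preserved by integration against $p^{-i}$; hence Topkis's theorem makes $\tau^{-i}\mapsto BR^i(\tau^{-i})$ isotone. Since $\bar\alpha\underline\sigma^{-i^\dagger}\geq_\Sigma\bar\sigma^{-i^\dagger}$ by the very definition of $\bar\alpha$,
\begin{equation*}
\bar\alpha\,\underline\sigma^{i^\dagger}(c^\dagger/\bar\alpha)=BR^{i^\dagger}(\bar\alpha\underline\sigma^{-i^\dagger})(c^\dagger)\geq BR^{i^\dagger}(\bar\sigma^{-i^\dagger})(c^\dagger)=\bar\sigma^{i^\dagger}(c^\dagger)=\bar\alpha\,\underline\sigma^{i^\dagger}(c^\dagger),
\end{equation*}
so $\underline\sigma^{i^\dagger}(c^\dagger/\bar\alpha)\geq\underline\sigma^{i^\dagger}(c^\dagger)$. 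But $c^\dagger/\bar\alpha<c^\dagger$, and condition (\ref{eq:strictincrease}) applied to the equilibrium $\underline\sigma$ says that $\underline\sigma^{i^\dagger}(\cdot)$ is strictly increasing in the type, a contradiction.

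\textbf{Main obstacle.} The delicate step I expect to spend the most effort on is the boundary control: one must ensure that $c^\dagger/\bar\alpha\geq c_*$, so that $\underline\sigma^{i^\dagger}$ may be evaluated at $c^\dagger/\bar\alpha$, and that $\bar\alpha\underline\sigma^i(c)$ stays below $b^*$ for every $(i,c)$, so that the rescaled profile is an admissible bid. The fourth bullet is tailored precisely to encode the latter, and combined with the monotonicity of $\underline\sigma^{i^\dagger}$ and the location of $(i^\dagger,c^\dagger)$ it should also yield the former; checking this reduction carefully, while simultaneously justifying that the best-reply correspondence admits the extremal continuous selections needed for the chain of inequalities above, is the technical heart of the argument.
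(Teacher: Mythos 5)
Your overall strategy is the same as the paper's: take the extremal equilibria $\underline\sigma\leq\bar\sigma$, form the supremum $\alpha>1$ of the ratio $\bar\sigma^i(c)/\underline\sigma^i(c)$, use continuity and compactness to attain it, rescale $\underline\sigma$ by $\alpha$ (admissible by the fourth bullet), exploit the homogeneity \eqref{as:scaling} of $K$ together with the isotonicity of the best replies, and derive a contradiction with the attainment point via the strict monotonicity condition \eqref{eq:strictincrease}. Up to and including the inequality $BR^{i^\dagger}(\alpha\underline\sigma^{-i^\dagger})(c^\dagger)\geq BR^{i^\dagger}(\bar\sigma^{-i^\dagger})(c^\dagger)=\alpha\underline\sigma^{i^\dagger}(c^\dagger)$, your argument matches the paper's.

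However, the step you yourself flag as the ``main obstacle'' is a genuine gap, and it does not close the way you hope. You rewrite the rescaled best reply as $\alpha\,\underline\sigma^{i^\dagger}(c^\dagger/\alpha)$, i.e.\ you evaluate the equilibrium strategy at the rescaled \emph{type} $c^\dagger/\alpha$, and then invoke strict monotonicity of $\underline\sigma^{i^\dagger}$ in the type. But $c^\dagger/\alpha$ need not belong to $\T^{i^\dagger}=[c_*,c^*]$: nothing prevents the supremum from being attained at $c^\dagger=c_*$, in which case $c^\dagger/\alpha<c_*$ and $\underline\sigma^{i^\dagger}(c^\dagger/\alpha)$ is simply undefined; the fourth bullet controls only that $\alpha\underline\sigma^i(c)\leq b^*$ (the bid side), not the type side, and monotonicity of $\underline\sigma^{i^\dagger}$ gives you nothing here. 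The paper's proof avoids this entirely by never evaluating $\underline\sigma$ at a rescaled type: after the change of variable it keeps the object as the set $\argmax_{u}\E_{c^{-i}}(u-c^\dagger/\alpha)K^i(u,\underline\sigma^{-i}[c^{-i}])$, which is a well-posed optimization problem for \emph{any} positive parameter $c^\dagger/\alpha$, and then applies \eqref{eq:strictincrease} with $c_1=c^\dagger\in[c_*,c^*]$ and $c_2=c^\dagger/\alpha\in\;]0,c^*]$ --- note that the hypothesis is deliberately stated on the asymmetric domain $[c_*,c^*]\times]0,c^*]$ precisely so that $c_2$ may fall below $c_*$. To repair your argument you should replace ``$\underline\sigma^{i^\dagger}(c^\dagger/\alpha)$'' by ``any element of $\argmax_{b\in\B^{i^\dagger}}\Pi^{i^\dagger}_{c^\dagger/\alpha}(b,\underline\sigma^{-i^\dagger})$'' and conclude from \eqref{eq:strictincrease} that every such element is strictly below every element of $\argmax_{b\in\B^{i^\dagger}}\Pi^{i^\dagger}_{c^\dagger}(b,\underline\sigma^{-i^\dagger})\ni\underline\sigma^{i^\dagger}(c^\dagger)$, which yields $\bar\sigma^{i^\dagger}(c^\dagger)<\alpha\underline\sigma^{i^\dagger}(c^\dagger)$ and the desired contradiction.
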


The first item means  that if every player multiplies his
bid by the same constant, then the resulting allocation does not
change. This should be satisfied  with inelastic demand.

\begin{theorem}[Converging dynamic]
\label{th:scheme}

Assume:
\begin{itemize}
\item $b\in \B^i \rightarrow \Pi^i_{\sigma^{-i}}(b,c)$ is
  $C^2$ and 
  $ b \rightarrow\partial_{b}\Pi^i_{\sigma^{-i}}(b,c)$ is
 uniformly
Lipschitz  for all $(c,\sigma)\in \T^i\times\Sigma^{-i}$.  
\item $\sigma_*(\T)$ is  included  in  $]b_*,b^*[^{\I}$, where $\sigma_*$ is the smallest equilibrium's strategy profile. 
\item For any $(\sigma^{-i},c) \in \Sigma^i\times\T^i$, $\Pi^i_{\sigma^{-i}}(b,c)$ is concave in  $b$.
\end{itemize}
Then the solution to  the system of differential equations 
\begin{eqnarray*}
\label{system}
\forall (i,c)\in \I\times \T^i \quad  \partial_t \sigma^i(c,t)=\partial_{b}\Pi^i_{\sigma(,t)^{-i}}(\sigma^i(c,t),c)
 \\
 \sigma^i(c,0)=c
\end{eqnarray*}
converges to  the smallest equilibrium strategy profile
 $\sigma_*$ as $t$ goes to $+\infty$.
\end{theorem}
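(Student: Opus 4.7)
The plan is to treat the system as an order-preserving flow in the Banach space $(L^\infty(\T,\B^{\I}),\|\cdot\|_\infty)$, show that the trajectory $t\mapsto \sigma(\cdot,t)$ is nondecreasing and bounded above by $\sigma_*$, and then pass to the limit to obtain a fixed point $\sigma_\infty\leq\sigma_*$ which, by minimality of $\sigma_*$, must coincide with it. Local existence and uniqueness follow from Cauchy--Lipschitz applied to the Banach space above, thanks to the uniform Lipschitz hypothesis on $b\mapsto\partial_b\Pi^i_{\sigma^{-i}}(b,c)$; since the kernel $K^i$ is bounded and the flow stays inside $[c_*,c^*]^{\I}$ (see the next paragraph), the solution extends to $[0,+\infty)$.

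For monotonicity in $t$, Assumption \ref{assumption:IDP} combined with the $C^2$ hypothesis yields $\partial_{b^i}\partial_{b^j}\pi^i\geq 0$ for $j\neq i$, so $\partial_b\Pi^i_{\sigma^{-i}}(b,c)$ is (weakly) increasing in $\sigma^{-i}$. This makes the vector field driving the ODE cooperative in the Kamke sense. At $t=0$ we have $\sigma^i(c,0)=c$ and
\begin{equation*}
\partial_t \sigma^i(c,0) = \int K^i(c,c^{-i})\,p^{-i}(c^{-i})\,\dd c^{-i} \geq 0,
\end{equation*}
so the flow starts by weakly increasing. A standard comparison argument applied to the difference $\delta(t)=\sigma(t+h)-\sigma(t)$ with $h>0$, writing $\partial_t\delta^i(c,t)=\int_0^1 DF^i(c,\sigma(t)+s\delta(t))\cdot\delta(t)\,ds$ and exploiting the nonnegative off-diagonal entries together with the uniform Lipschitz bound, then yields $\sigma(t)\leq\sigma(t+h)$ pointwise.

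For the upper bound, observe that $\sigma_*^i(c)\geq c$ for every equilibrium (bidding below cost is dominated since $K^i\geq 0$), hence $\sigma(0)\leq\sigma_*$. Suppose a first crossing occurs at some $(i,c_0,t_0)$ with $\sigma(t)\leq\sigma_*$ throughout $[0,t_0]$ and $\sigma^i(c_0,t_0)=\sigma_*^i(c_0)$. The interiority assumption $\sigma_*(\T)\subset\,]b_*,b^*[^{\I}$ together with the concavity of $\Pi^i$ in $b$ gives the first-order condition $\partial_b\Pi^i_{\sigma_*^{-i}}(\sigma_*^i(c_0),c_0)=0$, and strict increasing differences together with $\sigma^{-i}(t_0)\leq \sigma_*^{-i}$ yields
\begin{equation*}
\partial_t \sigma^i(c_0,t_0) = \partial_b \Pi^i_{\sigma^{-i}(t_0)}(\sigma_*^i(c_0),c_0) \leq \partial_b \Pi^i_{\sigma_*^{-i}}(\sigma_*^i(c_0),c_0) = 0,
\end{equation*}
which blocks the crossing. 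Monotone convergence then gives a pointwise limit $\sigma_\infty\leq\sigma_*$; the Lipschitz/continuity hypotheses and dominated convergence in the expectation defining $\Pi^i$ let us pass to the limit in the ODE, so that $\partial_b\Pi^i_{\sigma_\infty^{-i}}(\sigma_\infty^i(c),c)=0$ for every $(i,c)$. Concavity in $b$ promotes this FOC to a global maximum, so $\sigma_\infty$ is a pure Nash equilibrium; minimality of $\sigma_*$ forces $\sigma_\infty=\sigma_*$.

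The main obstacle I anticipate is making the ``first-crossing'' step rigorous in the function-space setting, since $c$ ranges over the continuum $\T$ and there is no guarantee of an honest pointwise contact. The clean substitute is to work with the scalar quantity $v(t)=\sup_{i,c}\bigl(\sigma^i(c,t)-\sigma_*^i(c)\bigr)_+$ and show that its upper Dini derivative is nonpositive whenever $v(t)=0$, i.e., to establish a function-space maximum principle; justifying this requires the uniform Lipschitz estimate from the first hypothesis and the joint continuity of $\partial_b\Pi^i$ in $(b,c,\sigma^{-i})$, and is the only place where the regularity assumptions of the theorem are used beyond well-posedness.
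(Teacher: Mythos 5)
Your proposal follows essentially the same route as the paper: Picard/Cauchy--Lipschitz well-posedness of the flow, monotonicity in $t$ together with domination by $\sigma_*$ obtained from the increasing-differences (cooperative) structure and the first-order condition at the interior equilibrium, then monotone convergence to a stationary point that concavity identifies as an equilibrium, forced to equal $\sigma_*$ by minimality. Your write-up is in fact more detailed than the paper's (which compresses the monotonicity, comparison, and first-crossing steps into a single sentence), and the function-space maximum-principle issue you flag at the end is indeed left implicit there.
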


\section{Application}
\label{sec:examples}

Consider a simple geographical
electricity market with two nodes (Node 1 and Node 2). 
The nodes are connected by a line  through which  electricity can be
sent. There is a known (inelastic) demand $d$ at each node. 
We assume marginal prices to be constant, within a compact $[b_*,b^*]$.
We consider that there is one producer at each node, namely
$a_1$ and $a_2$. 
An independent operator has to allocate the production  to meet the
demand at each node and minimize the total cost paid to the
producers. 
When a quantity $h$ of  electricity is sent through the line, $rh^2$
is lost in the process (Joule effect). 
The players of the Bayesian game are the two producers, who want to maximize
their expected profit (we say \emph{expected} because they do not know the other player
production cost).
Solving the independent operator problem, we get
\begin{equation*}
K^i(b^i,b^{-i})= \begin{cases} F(b^i,b^{-i}) &\mbox{if } F(b^i,b^{-i})
  \geq 0
  \mbox{ and } F(b^{-i},b^{i}) \geq 0\\ 
0  & \mbox{if }  F(b^i,b^{-i})
  < 0\\
\bar{q}
&\mbox{if }F(b^{-i},b^{i}) < 0, \\
\end{cases}
\end{equation*}
where 
\begin{equation*}
F(x,y) = d+\frac{1}{2r}\left(\frac{x-y}{x+y}\right)^2 - \frac{1}{r}\frac{x-y}{x+y}\quad
\mbox{ and }\quad
\bar{q}= 2\frac{1-\sqrt{1-2dr}}{r}.
\end{equation*}
Therefore Assumption \ref{assumption:kernelmonotonicity} is 
satisfied.
Observe that the increasing difference property is not 
satisfied  everywhere.

In the following, we assume that $F(b^*,b_*)\geq 0$. Therefore the
corner solutions are not to be considered and  $K^i(b^i,b^{-i})=
F(b^i,b^{-i})$. Moreover, we assume $b^* <2 b_*$.
Then the payoff writes $\pi^i_c(b^i,b^{-i}) = (b^i-c)K^i(b^i,b^{-i})$.
Therefore 
\begin{equation*}
\partial_{xy}\pi^i_c(x,y) = \frac{4y}{r(x+y)^4}\left(     x(2y-x)+c(2x-y)       \right)>
0
\end{equation*}
Therefore the  strict increasing differences condition \ref{assumption:IDP}
is satisfied.
So Theorem \ref{th:existence} applies.

Next it  is clear  that the scaling property \eqref{as:scaling} of Theorem
\ref{th:uniqueness} is satisfied. 
Using 
\begin{equation*}
\partial_{xc}\pi^i_c(x,y) = \frac{4y^2}{r(x+y)^4}>
0
\end{equation*}
we show as in \cite{van2007monotone} that \eqref{eq:strictincrease} is satisfied.

Observe that 
\begin{equation*}
\partial_{xx}\pi^i_c(x,y) = \frac{4y}{r(x+y)^4}\left(     x - 3 c - 2
  y       \right) < 0
\end{equation*}
 Therefore $\pi$ is strictly  concave with respect to its first variable, and by
 integration, so is $\Pi$.  
Therefore by Berge's theorem, the best reply is continuous in $c$. 
To show that \eqref{eq:alpha}, we first observe that in full
information, symmetric  setting, if the cost is $c$ for both players, then the best
reply is  $\frac{c}{1-2rd}$. We combine this observation with the
monotonicity of the best reply with respect to the type and the
opponent strategy to conclude that any optimal bid $b$ should satisfy
\begin{equation*}
b\in [\frac{c_*}{1-2rd}, \frac{c^*}{1-2rd}]
\end{equation*} 
Therefore condition \eqref{eq:alpha} is satisfied for $b^*$ large
enough and Theorem
\ref{th:uniqueness} applies.

We have already checked that all condition to apply Theorem
\ref{th:scheme} were satisfied.

\section{Existence of a Nash Equilibrium}
\label{sec:existence}

\subsection{General Preliminary Results}

\begin{definition}[Strict Single crossing property, see \cite{topkis1998supermodularity}]
\label{def:sscp}
Let $\X$, $\Y$ and $\Z$ be partially ordered set, let $f(x,z)$ be a
function of a subset $\Ss $ of $\X\times \Z$ into $\Y$, then $f(x,z)$
satisfies the strict single crossing property in $(x,z)$ on $\Ss $ if for all $x_1$
and $x_2$ in $\X$ and $z_1$, $z_2$ in $\Z$ with $x_1<x_2$, $z_1<z_2$ and
$(x_1,x_2)\times (z_1,z_2)$ being a subset of $\Ss$, $f(x_1,z_1) \leq
f(x_2,z_1)$ implies $f(x_1,z_2) <
f(x_2,z_2)$.
\end{definition}

\begin{lemma}
\label{lemma:SSCP}
The application
$(b,\sigma^{-i}) \rightarrow\Pi^i_{\sigma^{-i}}(b,c)$ satisfies the Strict Single Crossing Property.
\end{lemma}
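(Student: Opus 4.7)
The plan is to unpack the definition of SSCP and push the pointwise strict increasing differences of $\pi^i$ (Assumption~\ref{assumption:IDP}) through the integral defining $\Pi^i$.

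First I would fix a type $c\in\T^i$ and consider $b_1<b_2$ in $\B^i$ together with $\sigma_1^{-i}<_{\Sigma^{-i}}\sigma_2^{-i}$. By the product order on $\Sigma^{-i}$, the first inequality means $\sigma_1^{-i}(c^{-i})\leq\sigma_2^{-i}(c^{-i})$ for every $c^{-i}\in\T^{-i}$ and the two profiles differ on a set $E\subset\T^{-i}$ of positive $p^{-i}$-measure (this is the natural interpretation of strict order modulo the measure $p^{-i}$, and I would state it explicitly). On $E$ I apply Assumption~\ref{assumption:IDP} pointwise to obtain
\begin{equation*}
\pi^i_c(b_2,\sigma_1^{-i}(c^{-i}))-\pi^i_c(b_1,\sigma_1^{-i}(c^{-i}))<\pi^i_c(b_2,\sigma_2^{-i}(c^{-i}))-\pi^i_c(b_1,\sigma_2^{-i}(c^{-i})),
\end{equation*}
while on the complement of $E$ the two sides of the inequality coincide trivially.

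Next I would integrate this pointwise inequality against $p^{-i}(c^{-i})\,\dd c^{-i}$. Since $E$ has positive $p^{-i}$-mass and the inequality on $E$ is strict (with measurability of the integrand guaranteed by the standing hypotheses on $\pi^i$), integration preserves the strict inequality and yields
\begin{equation*}
\Pi^i_{\sigma_1^{-i}}(b_2,c)-\Pi^i_{\sigma_1^{-i}}(b_1,c)<\Pi^i_{\sigma_2^{-i}}(b_2,c)-\Pi^i_{\sigma_2^{-i}}(b_1,c).
\end{equation*}

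Finally, to conclude the SSCP in the sense of Definition~\ref{def:sscp}, I would assume the hypothesis $\Pi^i_{\sigma_1^{-i}}(b_1,c)\leq \Pi^i_{\sigma_1^{-i}}(b_2,c)$, rearrange to obtain a nonnegative left-hand side in the displayed inequality above, and combine with the strict inequality to deduce $\Pi^i_{\sigma_2^{-i}}(b_1,c)<\Pi^i_{\sigma_2^{-i}}(b_2,c)$, as required.

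The only real obstacle is the interpretation of the strict order on $\Sigma^{-i}$: we need the set where the two profiles differ to be $p^{-i}$-non-negligible, since otherwise integration would collapse the strict inequality. I would address this by either working with equivalence classes of strategies modulo $p^{-i}$ or by explicitly stipulating that $\sigma_1^{-i}<\sigma_2^{-i}$ means strict inequality on a set of positive $p^{-i}$-measure; either convention is harmless for the fixed-point/lattice arguments that follow and makes the strict conclusion automatic.
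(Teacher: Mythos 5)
Your proof is correct and follows essentially the same route as the paper's: apply the pointwise strict increasing differences of $\pi^i_c$, integrate against $p^{-i}$, and combine with the hypothesis $\Pi^i_{\sigma_1^{-i}}(b_1,c)\leq\Pi^i_{\sigma_1^{-i}}(b_2,c)$. In fact you are more careful than the paper, which asserts the strict pointwise inequality ``for any $c^{-i}$'' even though Assumption~\ref{assumption:IDP} only gives strictness where $\sigma_1^{-i}(c^{-i})<\sigma_2^{-i}(c^{-i})$; your explicit requirement that the profiles differ on a set of positive $p^{-i}$-measure is exactly the convention needed to make the integrated inequality strict.
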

\begin{proof}
Let $b_1<b_2\in\B$ and $\sigma_1^{-i}<\Sigma_2^{-i}\in\Sigma^{-i}$ such that 
$\Pi^i_c(b_1,\sigma_1^{-i})\leq \Pi^i_c(b_2,\sigma_1^{-i})$.
By increasing differences, for any $c^{-i}\in \T^{-i}$, we have 
\begin{equation*}
\label{eq:intgrandIneq}
\pi^i_c(b_2,\sigma_1^{-i}(c^{-i}))-\pi^i_c(b_1,\sigma_1^{-i}(c^{-i}))<
\pi^i_c(b_2,\sigma_2^{-i}(c^{-i}))-\pi^i_c(b_1,\sigma_2^{-i}(c^{-i})),
\end{equation*} so
multiplying by $p^{-i}(c^{-i})$, and integrating,
we get
\begin{equation*}
0\leq\Pi^i_c(b_2,\sigma_1^{-i})- \Pi^i_c(b_1,\sigma_1^{-i})< \Pi^i_c(b_2,\sigma_2^{-i})- \Pi^i_c(b_1,\sigma_2^{-i}),
\end{equation*}
where the first inequality comes from the hypothesis.
\end{proof}
\begin{definition}[Quasisupermodularity,  see
  \cite{topkis1998supermodularity}]
\label{def:quasisupermodularity}
Let $\X$ be a lattice, $\Z$ a partially ordered set and $f$ a
function
from $\X$ to $\Z$, then we say that $f$ is quasisupermodular if for all
$x_1$ and $x_2$ from $X$, $f(x_1 \land x_2) \leq f(x_1)$ implies $f(x_2)\leq f(x_1
\lor x_2)$ and  $f(x_1 \land x_2) < f(x_1)$ implies $f(x_2)< f(x_1
\lor x_2)$.
\end{definition}
\begin{lemma}[Quasi Supermodularity]
\label{lemma:QSM}
 For any $i\in \I$, $c \in \T$, $\sigma^{-i}\in \Sigma^{-i}$, 
$b\rightarrow\Pi^i_{\sigma^{-i}}(c,b)$ is quasisupermodular.
\end{lemma}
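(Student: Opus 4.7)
The plan is to observe that the lattice in question is trivial, so the definition reduces to a tautology. Recall that each bid set $\B^i$ is a subset of the compact interval $[c_*,c^*]\subset\R$, hence it is totally ordered (a chain) under the usual order on $\R$. The induced lattice operations are simply $b_1\wedge b_2=\min(b_1,b_2)$ and $b_1\vee b_2=\max(b_1,b_2)$.

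The only step is then to unfold Definition~\ref{def:quasisupermodularity} on a chain. Given any pair $(b_1,b_2)\in\B^i\times\B^i$, up to relabelling we may assume $b_1\leq b_2$. Then $b_1\wedge b_2=b_1$ and $b_1\vee b_2=b_2$, so for every function $f:\B^i\to\R$ we have
\begin{equation*}
f(b_1\wedge b_2)=f(b_1)\quad\text{and}\quad f(b_1\vee b_2)=f(b_2),
\end{equation*}
and both implications
\begin{equation*}
f(b_1\wedge b_2)\leq f(b_1)\ \Rightarrow\ f(b_2)\leq f(b_1\vee b_2),\qquad f(b_1\wedge b_2)< f(b_1)\ \Rightarrow\ f(b_2)< f(b_1\vee b_2)
\end{equation*}
reduce to tautologies (the hypothesis of the strict version is in fact vacuous). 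Applying this to $f(b)=\Pi^i_{\sigma^{-i}}(c,b)$ gives the claim for every fixed $i$, $c$, $\sigma^{-i}$.

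There is no real obstacle here: the lemma is a sanity check that the general lattice-theoretic machinery of Topkis applies in our one-dimensional bid setting. Its role in the sequel will presumably be to invoke Milgrom--Shannon type comparative-statics results (which combine quasisupermodularity in $b$ with the strict single crossing property established in Lemma~\ref{lemma:SSCP}) to get monotonicity of the best reply correspondence in $\sigma^{-i}$, which is the actual ingredient needed for the Tarski-style fixed-point argument underlying Theorem~\ref{th:existence}.
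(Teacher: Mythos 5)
Your proof is correct and is essentially the paper's argument spelled out: the paper simply states the lemma is ``trivial since we are in a monodimensional setting,'' and your unfolding of Definition~\ref{def:quasisupermodularity} on a chain is exactly the content behind that remark. (Minor point: since the definition is not symmetric in $x_1,x_2$, strictly you should check both orderings rather than ``relabel,'' but both cases reduce to tautologies just the same.)
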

\begin{proof}
Trivial since we are in a monodimensional setting.
\end{proof}
\subsection{Existence}
\label{section:existence}
We will need the following result: 
\begin{theorem}[Increasing Optimal Strategies
  (see \cite{topkis1998supermodularity}) page 83]
\label{th:increasingDecision}
Suppose that $\X$ is a lattice, $\Z$ is a partially ordered set, $\Ss_z$ is
a subset of $\X$ for each $ z$ in $\Z$, $\Ss_z$ is increasing in $z$ on $\Z$,
$f(x,z)$ is quasisupermodular in $x$ on $X$ for each $z$ in $\Z$, and
$f(x,z)$ satisfies the strict single crossing property in $(x,z)$ on $\X\times\Z$. If $z_1$ and $z_2$ are in $\Z$, $z_1<z_2$, $x_1$ is in
$\argmax_{x\in \Ss_{z_1}}f(x,z_1)$ and  $x_2$ is in
$\argmax_{x\in \Ss_{z_2}}f(x,z_2)$, then $x_1 \leq x_2$. Hence if one picks
any $x_z$ in $\argmax_{x\in \Ss_z} $ for each $z$ in $\Z$ with argmax non
empty, then $x_z$ is increasing in $z$ on $\{z:z\in \Z,\quad
\argmax_{x\in\Ss_z}f(x,z) \quad \mbox{non
  empty } \}$.
\end{theorem}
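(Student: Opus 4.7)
The plan is to argue by contradiction. Suppose $z_1 < z_2$, $x_1 \in \argmax_{x \in \Ss_{z_1}} f(x, z_1)$, $x_2 \in \argmax_{x \in \Ss_{z_2}} f(x, z_2)$, and assume $x_1 \not\leq x_2$. Since $\X$ is a lattice, the meets and joins $x_1 \wedge x_2$ and $x_1 \vee x_2$ exist, and the assumption $x_1 \not\leq x_2$ is equivalent to $x_1 \wedge x_2 < x_1$, which in turn forces $x_2 < x_1 \vee x_2$. The first ingredient is set-theoretic: the hypothesis that $\Ss_z$ is increasing in $z$, interpreted in the Veinott strong set order as is standard in Topkis' framework, yields exactly $x_1 \wedge x_2 \in \Ss_{z_1}$ and $x_1 \vee x_2 \in \Ss_{z_2}$, so both are legitimate competitors in the respective maximisation problems.

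Next I would chain three inequalities. First, optimality of $x_1$ on $\Ss_{z_1}$, tested against the feasible point $x_1 \wedge x_2 \in \Ss_{z_1}$, yields $f(x_1 \wedge x_2, z_1) \leq f(x_1, z_1)$. Second, quasisupermodularity of $f(\cdot, z_1)$ in the $x$ argument upgrades this into $f(x_2, z_1) \leq f(x_1 \vee x_2, z_1)$. Third, and this is where the monotone comparative statics actually happens, the strict single crossing property applied to the pair $x_2 < x_1 \vee x_2$ together with $z_1 < z_2$ converts this weak inequality at $z_1$ into the strict one $f(x_2, z_2) < f(x_1 \vee x_2, z_2)$. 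Since $x_1 \vee x_2 \in \Ss_{z_2}$, this strictly improves on $x_2$ in $\Ss_{z_2}$, contradicting the optimality of $x_2$ for $f(\cdot, z_2)$ on $\Ss_{z_2}$.

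For the selector part of the statement, I would note that the first conclusion already gives the pointwise inequality $x_{z_1} \leq x_{z_2}$ for every selection $x_z \in \argmax_{\Ss_z} f(\cdot, z)$ and every comparable pair $z_1 < z_2$ in the set of $z$ where the argmax is non-empty, which is exactly the claimed monotonicity of the selector.

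The main obstacle I expect is the careful bookkeeping between weak and strict inequalities across the three steps. Quasisupermodularity propagates only weak-to-weak (and strict-to-strict) inequalities, so the strictness required to contradict optimality at $z_2$ must be generated by the strict single crossing step, not by the quasisupermodularity step; this is why the statement uses the strict version of single crossing rather than the plain version. A related pitfall is the precise meaning of ``$\Ss_z$ increasing in $z$''; without the strong set order reading the two feasibility claims $x_1 \wedge x_2 \in \Ss_{z_1}$ and $x_1 \vee x_2 \in \Ss_{z_2}$ cannot both be obtained simultaneously, and the argument collapses.
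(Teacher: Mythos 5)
Your argument is correct and is precisely the standard Milgrom--Shannon/Topkis proof of this monotone comparative statics result; the paper itself states the theorem as an imported result from Topkis (p.~83) and gives no proof, so there is nothing to diverge from. The chain (feasibility of $x_1\wedge x_2$ and $x_1\vee x_2$ via the strong set order, optimality of $x_1$, quasisupermodularity, then strict single crossing to produce the strict inequality contradicting optimality of $x_2$) is exactly right, and your remark that the strictness must come from the single-crossing step rather than the quasisupermodularity step correctly identifies why the strict version of the hypothesis is needed. Note also that the ``induced set ordering'' you invoke is the same strong set order the paper defines explicitly just after this theorem, so your reading of ``$\Ss_z$ increasing in $z$'' is the intended one.
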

\begin{definition}[Induced Set ordering]
Let $\X$ be a lattice, $\X_1$ and $\X_2$ two non empty subsets of $\X$. 
We say that  $\X_1\sqsubseteq
 \X_2 $ iff 
for any $(x_1,x_2) \in \X_1\times\X_2$, 
$x_1\wedge x_2 \in \X_1$ and 
$x_1 \vee x_2 \in \X_2$.
\end{definition}

Combining Lemma \ref{lemma:SSCP}, Lemma \ref{lemma:QSM} with Theorem 
\ref{th:increasingDecision}, we get:
\begin{lemma}
\label{lemma:increasingBR}
For any $i\in \I$,  for any $(c,\sigma^{-i}_1,\sigma^{-i}_2)\in\T^i\times
  (\Sigma^{-i})^2$, such that  $\sigma^{-i}_1<\sigma^{-i}_2$,
for any $(\beta_1,\beta_2)\in BR^i(\sigma^{-i}_1)\times BR^i(\sigma^{-i}_2)$
\begin{equation*}
\label{eq:increaingBR}
\beta_1(c) \leq\beta_2(c)
\end{equation*}
In particular,  $BR^i$ is increasing in the induced set ordering on
$\Sigma^i$.
In addition,
 for any $\sigma^{-i} \in \Sigma^{-i}$,
$BR^i(\sigma^{-i})$ is a complete sublattice.

\end{lemma}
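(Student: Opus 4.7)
The plan is to derive all three claims from a pointwise application of Theorem~\ref{th:increasingDecision}. Fix $c\in\T^i$ and take $\X=\B^i$, $\Z=\Sigma^{-i}$ ordered by $\leq_{\Sigma^{-i}}$, $\Ss_z=\B^i$ (trivially increasing in $z$), and objective $f(b,\sigma^{-i}):=\Pi^i_{\sigma^{-i}}(b,c)$. Lemma~\ref{lemma:QSM} gives quasisupermodularity of $f(\cdot,\sigma^{-i})$ on the one-dimensional lattice $\B^i$, and Lemma~\ref{lemma:SSCP} gives the strict single-crossing property of $f$ in $(b,\sigma^{-i})$. Theorem~\ref{th:increasingDecision} then yields: for $\sigma_1^{-i}<_{\Sigma^{-i}}\sigma_2^{-i}$, any $\beta_1\in BR^i(\sigma_1^{-i})$ and $\beta_2\in BR^i(\sigma_2^{-i})$ satisfy $\beta_1(c)\leq\beta_2(c)$. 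Since $c\in\T^i$ is arbitrary, this yields the first conclusion $\beta_1\leq_{\Sigma^i}\beta_2$.

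The induced-set-ordering claim is then immediate: $\beta_1\leq_{\Sigma^i}\beta_2$ gives $\beta_1\wedge\beta_2=\beta_1\in BR^i(\sigma_1^{-i})$ and $\beta_1\vee\beta_2=\beta_2\in BR^i(\sigma_2^{-i})$, hence $BR^i(\sigma_1^{-i})\sqsubseteq BR^i(\sigma_2^{-i})$.

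For the complete sublattice claim, I would fix $\sigma^{-i}\in\Sigma^{-i}$ and write $M(c):=\argmax_{b\in\B^i}\Pi^i_{\sigma^{-i}}(b,c)$. Continuity of $K^i$ combined with dominated convergence (the integrand is bounded by $(c^{*}-c_{*})K^{+}$) makes $\Pi^i_{\sigma^{-i}}(\cdot,c)$ continuous on the compact interval $\B^i$, so $M(c)$ is a nonempty compact subset of $\R$. Given any nonempty family $\{\beta_\alpha\}_{\alpha\in A}\subseteq BR^i(\sigma^{-i})$, the pointwise infimum and supremum take values in $M(c)$ for each $c$ because $M(c)$ is closed; this yields both the sublattice property (the case $|A|=2$) and completeness (arbitrary nonempty $A$), with the pointwise inf and sup serving as greatest lower bound and least upper bound in $(\Sigma^i,\leq_{\Sigma^i})$.

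The main obstacle I anticipate is not in these reductions but in justifying the strict conclusion of Theorem~\ref{th:increasingDecision}: the theorem needs $f$ to satisfy \emph{strict} single crossing at the given pair of parameters, which via Lemma~\ref{lemma:SSCP} requires the two strategy profiles to differ on a set of positive $p^{-i}$-measure so that the pointwise strict increasing differences survive integration. This is implicit in the full-support assumption on $p^{-i}$ made in the setup, and once accepted, the rest of the argument is a routine bookkeeping of pointwise facts.
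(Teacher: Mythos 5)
Your proof is correct and follows essentially the same route as the paper: the first claim and the induced-set-ordering claim are obtained, exactly as in the paper, by applying Theorem~\ref{th:increasingDecision} pointwise in $c$ with Lemmas~\ref{lemma:QSM} and~\ref{lemma:SSCP}, and then reading the set ordering off the pointwise comparison of selections. The only divergence is the complete-sublattice step: the paper invokes Corollary~2.7.1 of \cite{topkis1998supermodularity} (supermodularity plus continuity on a compact lattice), whereas you argue directly that each $\argmax_{b\in\B^i}\Pi^i_{\sigma^{-i}}(b,c)$ is a nonempty compact subset of $\R$, hence closed under arbitrary pointwise infima and suprema of selections; both are valid, yours being the elementary one-dimensional instance of Topkis's corollary. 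One caveat on your closing remark: the strictness issue you flag lives in Lemma~\ref{lemma:SSCP} rather than here, and full support of $p^{-i}$ alone does not rescue strictness when the two profiles differ only on a $p^{-i}$-null set --- but that gap, to the extent it is one, is inherited from the cited lemma and is present in the paper's own argument as well.
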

\begin{proof}
By continuity and compactness, for any $c\in\T^i$, $\argmax_b
\Pi^i_c(b,\sigma^{-i})$ is nonempty. 
Since $\Pi^i_c$ is quasisupermodular (Lemma \ref{lemma:QSM}) and
satisfies the strict single crossing property (Lemma \ref{lemma:SSCP}),
 by Theorem
\ref{th:increasingDecision},\eqref{eq:increaingBR} is satisfied for
any $c$. So any selection of $BR^i$ is increasing in $\sigma^{-i}$.
Therefore $BR^i$ is increasing in the induced set ordering.
Since $\Pi$ is continous in $b$ (by continuity of the integrand) and
supermodular in $b$ (since $b$ is monodimensionnal), by corollary
2.7.1 of \cite{topkis1998supermodularity}, $BR^i(\sigma^{-i})[c]$ is a
complete sublatice of $\B^i$, therefore $BR^i(\sigma^{-i})$ is a
complete sublatice  of $\Sigma^i$.
\end{proof}

\begin{theorem}[Tarsky fixed point~\cite{tarski1955lattice}, see
  \cite{topkis1998supermodularity} Theorem 2.5.1]
\label{theorem:fixedpoint}
Suppose that $\X$ is a non empty complete lattice, $Y(x)$ is an
increasing correspondence (in the induced set ordering) from $\X$ to
the set of the non empty complete 
sublattices of $\X$.  
Then the set of fixed point of $Y$ is a nonempty complete sublattice. 
\end{theorem}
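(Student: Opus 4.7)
The plan is to adapt the classical single-valued Tarski argument to the correspondence setting, exploiting the induced set ordering $\sqsubseteq$ and the fact that each $Y(x)$ is itself a complete sublattice of $\X$. I would first produce a greatest fixed point, obtain the least one by a symmetric construction, and only then upgrade to the full complete-sublattice statement by re-running the existence argument on suitable subintervals of $\X$.

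For the greatest fixed point, consider
\begin{equation*}
A = \{x \in \X : \exists\, y \in Y(x) \text{ with } x \leq y\}.
\end{equation*}
With $\bot$ the bottom of $\X$, non-emptiness of $Y(\bot)$ gives $\bot \in A$. Set $\bar{x} = \sup A$. For each $x \in A$ with witness $y_x \in Y(x)$, the fact that $x \leq \bar{x}$ and $Y$ is increasing in the induced set ordering yield $Y(x) \sqsubseteq Y(\bar{x})$, so for any fixed $z_0 \in Y(\bar{x})$ we have $y_x \vee z_0 \in Y(\bar{x})$. Taking the supremum of the family $\{y_x \vee z_0\}_{x \in A}$ inside the complete sublattice $Y(\bar{x})$ produces some $y^* \in Y(\bar{x})$ with $y^* \geq \bar{x}$, so $\bar{x} \in A$. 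Running the same argument once more at the pair $(\bar{x}, y^*)$ shows any such witness $y^*$ must itself lie in $A$, forcing $y^* \leq \bar{x}$, so $y^* = \bar{x} \in Y(\bar{x})$ is a fixed point, and by construction the greatest one. The dual set $B = \{x : \exists\, y \in Y(x) \text{ with } y \leq x\}$, which contains the top $\top$, yields the least fixed point $\underline{x}$ symmetrically.

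To obtain the full complete-sublattice structure of the set of fixed points $E$, I would take any non-empty $E' \subseteq E$ and aim to place $\sup E'$ and $\inf E'$ in $E$. For $s = \sup E'$, I would restrict attention to the complete lattice $[s, \top]$ and to the cut-down correspondence $Y_s(x) := Y(x) \cap [s, \top]$. Verifying that $Y_s$ is still $\sqsubseteq$-increasing and takes values in non-empty complete sublattices of $[s, \top]$ allows the existence argument to rerun inside $[s, \top]$; non-emptiness of $Y_s(x)$ follows by joining any $e \in E'$ (so $e \in Y(e) \sqsubseteq Y(x)$) with an arbitrary element of $Y(x)$, producing an element of $Y(x)$ above $s$. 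The resulting least fixed point of $Y_s$ must coincide with $s$ itself, because each $e \in E'$ is already a fixed point of $Y$ lying at or below $s$, and a strict compression above $s$ would contradict maximality. A symmetric restriction to $[\bot, \inf E']$ yields the infimum.

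The main obstacle I expect is the technical verification underpinning both halves: the induced ordering $\sqsubseteq$ asserts only pairwise join/meet closure, so passing from pairs of witnesses to suprema or infima of possibly infinite families and certifying the limit still lies in the correct $Y(\cdot)$ relies essentially on the completeness of each $Y(x)$ and must be threaded carefully; one also has to rule out the subtle possibility that the least fixed point of $Y_s$ in $[s, \top]$ strictly exceeds $s$, which is what forces $s$ itself into $E$.
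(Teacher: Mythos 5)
The paper does not actually prove this statement --- it is imported from Tarski and Topkis (Theorem 2.5.1) --- so there is no internal argument to compare against; your attempt has to be judged on its own. Your first two paragraphs are a correct and complete rendition of the standard Zhou-style argument: $A=\{x:\exists y\in Y(x),\ x\le y\}$ contains $\bot$; its supremum $\bar x$ is put back into $A$ by joining each witness $y_x$ with a fixed $z_0\in Y(\bar x)$ and using subcompleteness of $Y(\bar x)$ to take the supremum of the resulting family inside $Y(\bar x)$; and the second pass at $(\bar x,y^*)$ forces $y^*=\bar x\in Y(\bar x)$, giving the greatest fixed point, with the least one obtained dually from $B$. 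This already delivers everything the paper actually uses downstream (nonemptiness, and the extremal equilibria $\underline\sigma,\bar\sigma$ invoked in the uniqueness proof).

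The third paragraph, however, contains a genuine and unfixable gap, because the conclusion as stated --- ``complete \emph{sublattice}'' --- is too strong. The least fixed point of $Y_s$ on $[s,\top]$ is the least fixed point of $Y$ lying \emph{above} $s=\sup_{\X}E'$, and nothing forces it to equal $s$; the phrase ``a strict compression above $s$ would contradict maximality'' appeals to no actual property of $s$. Concretely, let $\X=\{0,1,2\}^2$ with the componentwise order and $Y(x)=\{f(x)\}$, where $f$ fixes $(0,0)$, $(1,0)$, $(0,1)$ and sends every other point to $(2,2)$. Then $f$ is increasing, each $Y(x)$ is a nonempty subcomplete sublattice, and the fixed points are $(0,0),(1,0),(0,1),(2,2)$; yet $(1,0)\vee(0,1)=(1,1)$ is not fixed, so the fixed-point set is not a sublattice of $\X$. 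The true conclusion (Zhou's theorem, which is what Topkis 2.5.1 asserts) is that the fixed-point set is a nonempty complete \emph{lattice}, with suprema and infima computed inside the fixed-point set rather than in $\X$. Your restriction to $[s,\top]$ and the cut-down correspondence $Y_s$ is in fact exactly the device that proves this correct statement --- the least fixed point of $Y_s$ \emph{is} the supremum of $E'$ within the fixed-point set --- so the argument should stop there instead of forcing that point to coincide with $s$.
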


\begin{proof}[Proof of Theorem \ref{th:existence}]
 With Lemma
\ref{lemma:increasingBR}, $Y = (BR^1(\sigma^{-1})\ldots
BR^n(\sigma^{-n}))$ is increasing in the induced set ordering on $\Sigma$. Since $\Sigma$ is a non
empty complete lattice (this comes from the definition of $\B$), Theorem
\ref{theorem:fixedpoint} ensures that the set of fixed points is a
nonempty complete sublattice. The elements of this  set satisfy the
definition of a Nash equilibrium. 
\end{proof}

\textbf{On the example: multinodal case.}
Observe that the reasoning could be extended to a multinodal, non
symmetric setting. One needs to use the strict increasing difference
property with respect to the neighboring nodes.

\section{Uniqueness Sufficient Condition}
\label{sec:uniqueness}

\begin{proof}[Proof of Theorem \ref{th:uniqueness}]
Assume that the equilibria set is not a singleton.
Then since it is a complete lattice, there exists a biggest and a
smallest equilibria in the set. 
We denote by $\bar{\sigma}\in \Sigma$ and
$\underline{\sigma}\in\Sigma$ the strategy profiles of those
equilibria.
The ratio 
\begin{equation*}
\frac{\bar{\sigma}^i[c]}{  \underline{\sigma}^i[c]}
\end{equation*}
is bounded by $b^*/b_*$ and therefore admits a supremum
$\alpha>1$.
By compactness of $\T$  and continuity of the extremal equilibrium
strategies (hypothesis), 
there exist $i^*\in \I$ and $c^*\in \T^{i^*}$ such that 
\begin{equation*}
\frac{\bar{\sigma}^{i^*}[c^*] }{  \underline{\sigma}^{i^*}[c^*]}=\alpha.
\end{equation*}
For any strategy profile $\sigma\in \Sigma$ we denote by
$\alpha\sigma$ the strategy profile defined
by 
\begin{equation*}
(\alpha \sigma)^i (c) = \alpha (\sigma^i (c)) 
\end{equation*}
for any $i\in \I$ and $c\in \T^i$. 

By \eqref{eq:alpha} 
$\alpha \underline{\sigma}$ belongs to $\Sigma$.
Now observe that  for $i\in \I$, $c\in \T$, 
\begin{eqnarray*}
BR^i(\alpha \underline{\sigma}^{-i})[c] = \argmax_{b\in \B^i} \E_{c^{-i}}  (b - c) K^i(
 b,\alpha \underline{\sigma}^{-i}[c^{-i}]    ) \\
= \argmax_{b\in \B^i} \E_{c^{-i}} (b - c) K^i(
  b/\alpha , \underline{\sigma}^{-i}[c^{-i}]    ) \\
= \argmax_{b\in \B^i} \E_{c^{-i}} (b/\alpha - c/\alpha)K^i(
  b/\alpha , \underline{\sigma}^{-i}[c^{-i}]    ) \\
= \alpha \argmax_{u,  \alpha u \in \B^i} \E_{c^{-i}} (u- c/\alpha)  K^i(
  u , \underline{\sigma}^{-i}[c^{-i}]    ) 
\end{eqnarray*}
where we applied the definition of BR, the scaling relation \eqref{as:scaling}
and  a change of variable. 
Now combining the last computation  with \eqref{eq:strictincrease}, 
we have 
\begin{equation*}
BR^i(\alpha \underline{\sigma}^{-i})[c]< \alpha \underline{\sigma}^i[c]
\end{equation*}

The inequality should be understood in the sense that any element of
the LHS set is smaller than the RHS.
Combining the definition of an equilibrium, Lemma
\ref{lemma:increasingBR} on the monotonicity of the  best
replies, and the last relation, we get
\begin{equation*}
  \bar{\sigma}^i[c] \in BR^i(\bar{\sigma}^{-i})[c] \leq BR^i(
\alpha \underline{\sigma}^{-i})[c]<\alpha \underline{\sigma^i}[c]
\end{equation*}
 which is not coherent with the definition of $\alpha$. We conclude
 that the
 equilibrium is unique.
\end{proof}

\section{A dynamic that converges to the smallest Nash Equilibrium}
\label{sec:scheme}

\subsection{Proof of theorem \ref{th:scheme}}

\begin{proof}
First we need to show that  \eqref{system}
has a solution. 
Let $T>0$ and 
$\D_T$  the set of measurable functions $\sigma$ from $[0,T]$ to
$\Sigma$.
On $\D_T$ we  consider the operator $\phi$
such
that, for any 
$\sigma \in \D_T$,
 $\phi_\sigma\in \D_T$ 
and for any $ (i,c,t)\in \I\times\T \times [0,T]$
\begin{equation*}
[\phi_\sigma(t)]^i(c) = c +
\int_0^t \partial_{b}\Pi^i_{[\sigma(s)]^{-i}}([\sigma(s)]^i(c),c)\dd s.
\end{equation*}

Observe that for $T$ small enough, $\D_T$ is stable by $\phi$, which 
is a contracting operator, and, moreover, $\D_T$ associated with the
$||_\infty$ is a closed subset of a
Banach space. Therefore we can apply Picard fixed point theorem and
denote by $\tilde{\sigma}_t$ the associated fixed point. Iterating the
reasoning we can prolong the flow $\tilde{\sigma}_t$ as long as it
stays strictly  below $b^*$.
Using the single crossing property of $\Pi$ (which is conserved
by integration from $\pi$), the fact that $\Pi$ is concave and that the
dynamic cancels at any equilibrium point, we see that
$\tilde{\sigma}_t$ increases and  stays below $\sigma_*$. 
Therefore the flow is defined for all $t\geq 0$, is bounded and
increasing and therefore converges as $t$ goes to $+\infty$ to a
stationary point. 
Using the concavity of $\Pi$, we deduce that the stationary point
is an equilibrium, therefore  the flow converges to $\sigma_*$.  
\end{proof}

\subsection{Remarks}
Observe that we could build the same kind of scheme to compute
$\sigma^*$. 
We introduce this scheme because  it showed better
converging properties on our example than the Best Reply iterations. 

We display in figure \ref{fig:experiment1} and  \ref{fig:experiment2} a numerical experiment with the Best
Reply iterations  and the continuous time dynamic. 
Even when the hypotheses of Theorem \ref{th:scheme} are not satisfied,
the scheme displayed good  converging properties. 
Our intuition is that it maintains the monotonic structure of the problem even when Tarksy's fixed-point theorem hypotheses   are not satisfied.

    \begin{figure}[b]
  \includegraphics[width=\textwidth]{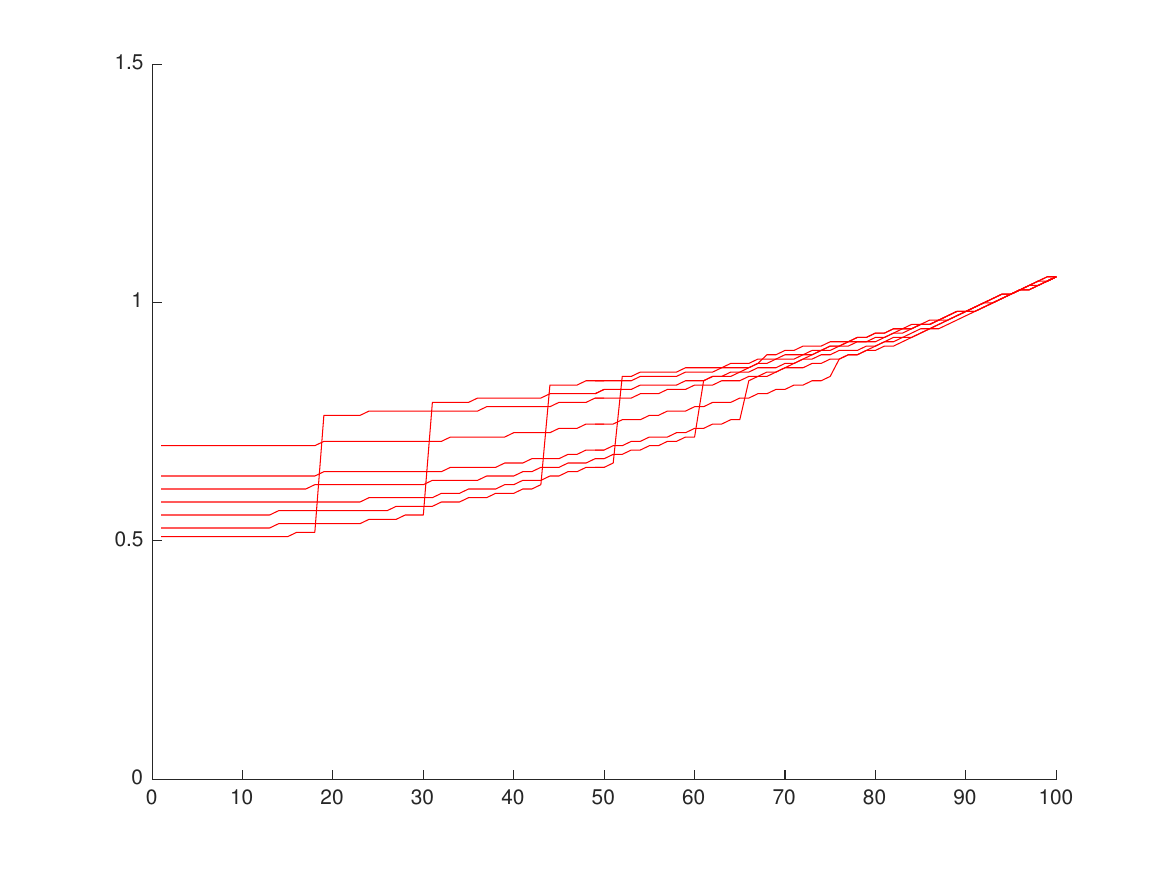}
\caption{The iterated best replies algorithm does not converge}\label{fig:experiment1}
    \end{figure}

    \begin{figure}[b]
        \includegraphics[width=\textwidth]{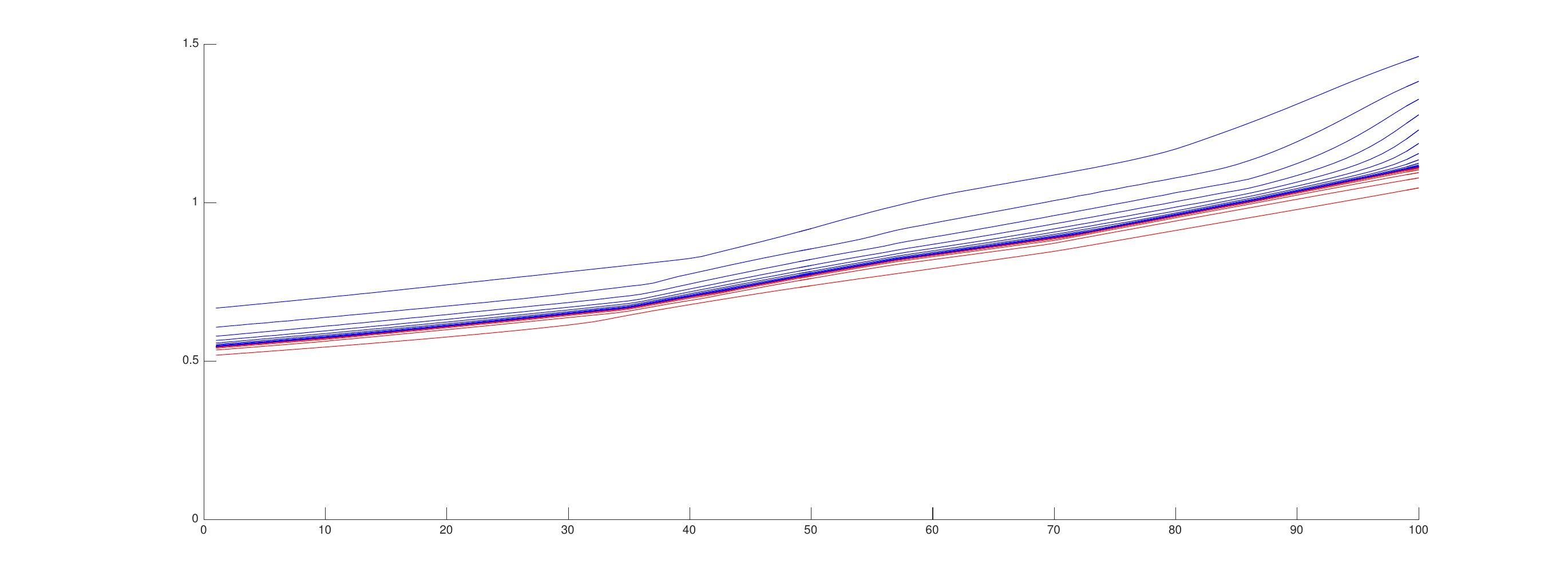}
        \caption{The differential approach converges when we start
          above or below the equilibrium strategy profil} \label{fig:experiment2}
    \end{figure}

\section{Conclusion and possible extension}
We have identified a class  of Bayesian games for which we showed that
there exists a unique pure Nash equilibrium to which a simple dynamic
converges. 
Numerical experiments seem to indicate that those results could be
reinvestigated with weaker assumptions. 
Further works include the extension to
general bidding functions (for instance, by replacing the constant marginal rate with a piece-wise linear one~\cite{heymann:hal-01416411}).

\providecommand{\bysame}{\leavevmode\hbox to3em{\hrulefill}\thinspace}
\providecommand{\MR}{\relax\ifhmode\unskip\space\fi MR }
\providecommand{\MRhref}[2]{%
  \href{http://www.ams.org/mathscinet-getitem?mr=#1}{#2}
}
\providecommand{\href}[2]{#2}

\end{document}